\pgfplotsset{compat=1.18}
\theoremstyle{plain}
\theoremstyle{definition}
  \newtheorem{corollary}{Corollary}
  \newtheorem{definition}{Definition}
  \newtheorem{example}{Example}
  \newtheorem{lemma}{Lemma}
  \newtheorem{proposition}{Proposition}
  \theoremstyle{remark}
\def\EE{\mathbb{E}}
\def\RR{\mathbb{R}}
\DeclareMathOperator*{\argmax}{arg\,max}
\begin{document}

\title{Sequential Non-Bayesian Persuasion}

\author{Yaron Azrieli and Rachana Das\thanks{Department of Economics, The Ohio State University, 1945 North High street, Columbus, OH 43210. azrieli.2@osu.edu and das.402@osu.edu. We are especially grateful for the comments of two anonymous referees and an Associate Editor of Games and Economic Behavior. We also thank PJ Healy, Masanori Kobayashi, Jim Peck, and Fedor Sandomirskiy, as well as audiences at several conferences where this work was presented, for their feedback.}}

\maketitle



\bigskip

\begin{abstract}
We study a model of persuasion in which the receiver's updating rule systematically distorts Bayesian posteriors. While in the classic Bayesian case providing information sequentially is never valuable, we show that the sender gains from sequential persuasion in several of the environments considered in the literature on strategic information transmission. Our proofs are constructive and reveal how properties of the bias interact with the payoff environment to make dynamic persuasion beneficial.  

\vspace{4mm}
\noindent Keywords: Bayesian persuasion, biased belief updating, sequential persuasion.


\end{abstract}

\newpage

\section{Introduction}\label{sec-introduction}

Following the work of \cite{KamenicaGentzkow2011}, Bayesian Persuasion (BP) has become a focus of economic theory research. As its name suggests, a pillar of this literature is Bayes rule: Upon observing new information, agents `rationally' update beliefs as Bayes rule prescribes. While this is a normatively appealing assumption, there is ample evidence that people regularly deviate from this rule and are influenced by various types of biases \citep{benjamin2019errors, ortoleva2024alternatives}. In a recent paper, \cite{de2022non} study BP while relaxing the assumption that the receiver follows Bayes rule; instead, they assume that the receiver is characterized by a systematic bias that determines their posterior as a function of the prior and the correct Bayesian posterior; among other things, they show that the concavification method used to solve the standard BP problem still applies in this setting.

In this paper, we revisit the framework of \cite{de2022non} but expand the set of persuasion strategies to include sequential information provision. It has been shown experimentally that releasing information over time and repeating information influence decisions in various contexts \citep{carlson2006leader, schulz2016preference}, but such persuasion strategies are redundant if decision makers are Bayesian: Any sequential strategy can be replaced by a single information structure that induces the same distribution over posteriors and hence the same expected payoffs for the sender and the receiver \citep{kamenica2009bayesian}. As we argue below, a departure from Bayesianism can explain why sequential persuasion is more effective than a one-shot release of information. 

Our main results compare the maximal attainable payoff for the sender with a single period of persuasion to their payoff with an unlimited (but finite) number of signals. A first simple observation is that with multiple signals we can no longer describe the bias of the receiver as a function of the correct Bayesian posterior: There are different combinations of signals that lead to the same Bayesian posterior (held by the sender) but to different biased posteriors for the receiver. As a result, the concavification approach is not applicable once sequential persuasion is allowed. Consequently, it is typically challenging to calculate the optimal payoff for the sender with sequential persuasion even in simple environments where concavification can be used to characterize the sender's payoff with one-shot persuasion as in \citet{de2022non}.\footnote{See the discussion in Subsection \ref{subsec-first-best} for more on this point.} Nevertheless, we show that in many of the classic environments considered in the literature sequential persuasion is valuable. We say that \emph{the sender gains from sequential persuasion} if there exists a finite sequence of experiments\footnote{More precisely, in a sequential persuasion strategy the sender can condition the experiment chosen in the $n$th period on the history of signal realizations in all previous $n-1$ periods. The sender can also decide whether to stop or provide additional information as a function of the history.} that yields a strictly higher expected payoff to the sender than the maximal expected payoff attainable with one experiment. 

The key property that distinguishes our receiver from a Bayesian one is that their belief remains interior regardless of the information they obtain. In other words, no state is ever completely ruled out. 
Another property of the updating rule that we use for our results is eventual learning: If the receiver is repeatedly presented with information that induces a given Bayesian posterior, then their belief converges to that posterior.\footnote{For some of the results we only need this to be true for beliefs that assign probability 1 to a certain state. See Definition \ref{def-eventual-learning} below.} Thus, while the receiver is not Bayesian immediately, their belief is similar to that of a Bayesian after a prolonged, constant persuasion. To illustrate the updating process and the underlying mechanisms driving our results, our examples focus on a parametric family of biases known as ``conservative Bayesianism'' in which the receiver's posterior given prior $\mu_0$ and Bayesian posterior $\mu$ is given by      
$$\alpha \mu_0 + (1 - \alpha)\mu,$$ 
where $\alpha\in (0,1)$ captures the strength of the bias.\footnote{This updating rule is interior (assuming the prior is interior) and satisfies eventual learning.} Under-updating of beliefs in light of new information is well-documented in the experimental literature \citep[Section 8]{benjamin2019errors} and can explain anomalies in the field \citep[Section 3]{ortoleva2024alternatives}. Furthermore, this specific form of bias is relatively easy to work with, has axiomatic foundations \citep{epstein2006axiomatic, kovach2021conservative}, and it preserves the martingale property of Bayesian updating that is known to be crucial in BP models. 

The first result, Proposition \ref{prop-transparent-2-actions}, concerns environments where the sender has state-independent preferences and the receiver has only two actions, such as the `Prosecutor-Judge' example of \citet{KamenicaGentzkow2011}. Assuming that the bias is interior and satisfies eventual learning, we prove that the sender gains from sequential persuasion in all such environments if and only if the updating rule is `weakly Bayes plausible', which means that for every information structure the prior is in the convex hull of the biased posteriors. Notice that conservative Bayesianism is weakly Bayes plausible (it is even Bayes plausible), and hence the sender gains from sequential persuasion when the receiver uses this updating rule. 

Second, we consider the case where the sender's payoff is linear in the belief of the receiver. Such environments are trivial if the receiver is Bayesian since by Bayes plausibility the payoff to the sender is constant in the information they provide. However, in Proposition \ref{prop-transparent-linear} we show that the sender gains from sequential persuasion in every such environment whenever the receiver's updating rule is interior, satisfies eventual learning, and is `balanced'. Balancedness restricts the directions of the bias (i.e., the directions from the Bayesian posteriors to the biased posteriors) --- see Definition \ref{def-balanced}; conservative Bayesianism is a balanced updating rule.  

Proposition \ref{prop-common-pref} concerns the case where sender and receiver have identical preferences. We give necessary and sufficient conditions for the sender to gain from sequential persuasion, assuming that the receiver's updating is interior and satisfies eventual learning. Finally, in Proposition \ref{prop-CS} we show that if preferences are as in the classic model of \citet{crawford1982strategic}, where preferences are misaligned by a constant bias term, the sender gains from sequential persuasion if the receiver's updating rule is interior, satisfies eventual learning, and is balanced.

To gain intuition for why sequential persuasion can outperform static persuasion, suppose that the receiver is a conservative Bayesian. There may be an action $a$ that yields a high payoff to the sender (either in a particular state or in all states), but for the receiver to choose this action they need to hold a relatively extreme belief. With only one signal, receiver's posterior cannot be too far from the prior due to the bias, so $a$ will not be chosen, but by repeating the information multiple times receiver's belief can drift further away to a point where $a$ is selected. Second, even if a belief that induces $a$ can be achieved with a single signal, the sender may be able to increase the probability that $a$ is chosen by providing additional information when the initial signal realization induces other actions.\footnote{To illustrate, in the `Prosecutor-Judge' example, even if the first realized signal indicates that the defendant is innocent, the belief of the judge remains interior, and the prosecutor can induce conviction with positive probability by providing additional information.} Third, with a single experiment the distribution over receiver's biased posteriors is Bayes plausible, but with multiple rounds this is no longer the case due to the disagreement that forms between the sender and receiver after the first signal. This implies that new action distributions that are preferable to the sender may be achieved. Propositions \ref{prop-transparent-2-actions}-\ref{prop-CS} exemplify these forces in economically relevant environments.

\subsection{Related literature}

The two recent papers by \citet{kobayashi2025dynamic} and \citet[Section 4.2.3]{yang2026stochastic} ask a similar question to the one we study here, but instead of looking at specific classes of persuasion environments, their results characterize updating rules that do not allow the sender to gain from sequential persuasion in any environment.\footnote{These two papers were developed simultaneously and independently of our work. The first draft of our paper was publicly posted on arXiv before these papers were available.} The results in these two papers differ in several nuanced ways, but both show that the key property of an updating rule that makes sequential persuasion irrelevant is \emph{divisibility}, a property originally defined in \citet{cripps2018divisible}. The proofs in these papers are based on abstract duality arguments and hence do not speak to the mechanisms that make sequential persuasion outperform static persuasion, or to whether this is an issue in economically relevant environments. In contrast, our paper focuses on some of the classic environments considered in the literature, and we show how properties of the updating bias allow the sender to gain from dynamic persuasion in each of these.

Another closely related paper is \citet{levy2022persuasion}, in which the receiver suffers from correlation neglect when obtaining multiple signals. The main result shows that the sender can take advantage of this bias to obtain any state-contingent posterior distribution. In contrast, a sender interacting with a biased receiver as defined in this paper has limited persuasion power even with an arbitrarily long sequence of experiments. See the discussion in Section \ref{sec-discussion} for a proof of this claim, as well as for more on the relation between the biases considered in this paper and correlation neglect.

More broadly, this paper is related to papers that introduce frictions to the receiver's interpretation and processing of information. As already discussed above, our work builds on the results in \cite{de2022non}. In \cite{augias2020persuading} the receiver is a wishful thinker who biases their belief (at a cost) whenever this increases their expected utility. \cite{galperti2019persuasion} studies a BP problem with a receiver that ex ante assigns zero probability to some states and updates in a non-Bayesian manner upon observing unexpected news. In \cite{eliaz2021persuasion} the sender adds ambiguity to their message and as a result the receiver fails to properly apply Bayes rule. Finally, \cite{bloedel2018persuasion}, \cite{wei2021persuasion}, and \cite{dall2024persuading} assume that the receiver is rationally inattentive and incurs a cost when processing the signal. All these papers consider static persuasion, while our focus is on the potential benefits of sequential information provision under non-Bayesian updating.

Another strand of related literature contains papers in which information revelation is sequential; this is by now a large literature and we only mention a few of the more closely related works. In \cite{su2021bayesian} and \cite{ni2023sequential}, the sender is restricted to a limited set of experiments and can therefore benefit from sending multiple signals. \cite{zhang2024receiver} studies a model in which the receiver is rationally inattentive, showing that sequential persuasion can endogenously arise in this case. \cite{le2019persuasion} and \cite{escude2023slow} consider a sender who sequentially provides information to a receiver subject to an exogenous constraint on the rate of information transmission. Several papers \citep{li2021sequential, koessler2022long, wu2023sequential} study the equilibrium of BP games with multiple senders who move sequentially. These papers and the rest of the literature restrict attention to Bayesian agents.


\section{The model}\label{sec-model}

\subsection{Environment}\label{subsec-environment}

We start with some notation. For any set $Y$ let $\Delta(Y)$ be the set of finite-support probability distributions on $Y$. If $Y$ is a set in some linear space then $Conv(Y)$ denotes the convex hull of $Y$. Let $\Omega$ be the finite set of states. Typical elements of $\Delta(\Omega)$ are denoted by $\mu$ or $\nu$ and are viewed as vectors in $\RR^\Omega$ with non-negative coordinates that sum up to 1. The support of $\mu$ is the set $Supp(\mu)=\{\omega\in \Omega :~ \mu(\omega)>0\}$, and we let $\Delta^o(\Omega)=\{\mu\in \Delta(\Omega):~ Supp(\mu)=\Omega\}$ be the set of full-support beliefs. Elements of $\Delta(\Delta(\Omega))$ will usually be denoted by $\tau$ and their support by $T=Supp(\tau)$; we say that $\tau$ is interior if $T\subseteq \Delta^o(\Omega)$. 

As in \citet{KamenicaGentzkow2011}, we consider a game between a sender called S and a receiver called R: S moves first choosing an information structure, a signal is realized with a distribution that depends on the true state $\omega\in \Omega$ and on the choice of S, and finally R chooses an action $a$ from some set $A$. The utility function for R is $u(\omega,a)$ and for S is $v(\omega,a)$. The action that maximizes the expected utility of R at belief $\mu$ is denoted by $\hat a(\mu)$, and as usual we assume that a maximizer exists and that R chooses an S-optimal action when indifferent. Thus, the expected payoff of S given that both R and S hold belief $\mu$ is $\hat{v}(\mu) := \EE_{\omega\sim\mu}[v(\omega, \hat{a}(\mu))]$. 

Throughout we assume that S and R start with a full-support common prior $\mu_0$. Let $X$ be the set of signals and let $\Sigma_X$ be the set of information structures with signals in $X$ (i.e., mappings from $\Omega$ to $\Delta(X)$). Using Bayes rule, any $\sigma\in \Sigma_X$ induces a distribution over posteriors $\tau_\sigma$ in the usual way: A belief $\mu$ is in the support $T_\sigma$ if there is a positive $\sigma$-probability signal $x\in X$ that induces posterior $\mu$, and the probability of any such $\mu$ is the total $\sigma$-probability of all the signals $x$ that induce it.    

As was shown in \cite{KamenicaGentzkow2011}, a posterior distribution $\tau$ can be induced by some $\sigma$ if and only if it is Bayes plausible, and hence the maximal equilibrium expected utility for S in this game is given by the concave closure of $\hat{v}$ at $\mu_0$:
\begin{eqnarray}\label{eqn-V}
     V(\mu_0) &:=& \max_{\tau}~~ \sum_{\mu\in T} \tau(\mu)\hat v(\mu) \\
     &&\textit{s.t. } \sum_{\mu\in T} \tau(\mu)\mu = \mu_0. \nonumber
\end{eqnarray}

\subsection{Biased updating}\label{subsec-bias}

We deviate from the standard theory by relaxing the assumption that R updates beliefs in a Bayesian way. Instead, we assume that R systematically distorts posterior beliefs as in \cite{de2022non}: There is a `bias function' $D:\Delta^o(\Omega)\times \Delta(\Omega)\to \Delta(\Omega)$, such that $D(\nu,\nu')$ is the belief that R holds given prior $\nu$ and Bayesian posterior $\nu'$.

Notice that $D$ is defined only for full-support priors. This is sufficient for our purposes because throughout the paper we assume that the belief of R remains interior. We formalize this in the following definition.

\begin{definition}\label{def-interior}
The bias $D$ is \textbf{interior} if $Supp(D(\nu,\nu'))=\Omega$ for every prior $\nu$ and Bayesian posterior $\nu'$.
\end{definition}

When $D$ is interior R never completely rules out any state, regardless of the information they receive. This is the key property that distinguishes our R from a Bayesian receiver. We note that any $D$ can be approximated by an interior one simply by mixing it with some fixed interior belief with an arbitrarily small weight. 

For our results below we will need to further assume that by repeatedly sending information that induces a fixed Bayesian posterior the belief of R converges to that posterior. To formalize this property we introduce the following notation. Given $\nu\in\Delta^o(\Omega)$ and $\nu'\in\Delta(\Omega)$, let $D^1(\nu,\nu')=D(\nu,\nu')$, and for every $n>1$ let $D^n(\nu,\nu') = D(D^{n-1}(\nu,\nu'),\nu')$. Thus, $D^n(\nu,\nu')$ is R's belief when starting at the prior $\nu$ and after receiving $n$ signals that repeatedly induce a Bayesian posterior $\nu'$.

\begin{definition}\label{def-eventual-learning}
(i) The bias $D$ satisfies \textbf{eventual learning} if $\lim_n D^n(\nu,\nu')=\nu'$ for every pair $\nu\in\Delta^o(\Omega)$ and $\nu'\in\Delta(\Omega)$.\\
(ii) The bias $D$ satisfies \textbf{eventual learning of states} if $\lim_n D^n(\nu,\delta_\omega)=\delta_\omega$ for every $\nu\in\Delta^o(\Omega)$ and $\omega\in\Omega$.
\end{definition}

Note that eventual learning of states is a weaker property than eventual learning --- for most of our results it will be sufficient to assume that repeatedly sending fully revealing signals induces R's belief to converge to the vertex of $\Delta(\Omega)$ corresponding to the realized state. 

\medskip

When R has bias $D$, any realization of a signal $x$ from an information structure $\sigma$ induces a pair of posteriors: S holds the Bayesian posterior $\mu$, while R holds the biased posterior $D(\mu_0,\mu)$. Thus, any $\sigma$ now induces a distribution $\lambda_\sigma^D$ over $\Delta(\Omega)\times \Delta(\Omega)$, where the first coordinate corresponds to the Bayesian belief of S and the second to the biased belief of R. The marginal of $\lambda_\sigma^D$ on its first coordinate is $\tau_\sigma$ as defined earlier, while the marginal on the second coordinate is denoted $\tau_\sigma^D$.  

Finally, we recall the characterization of the highest achievable payoff for S with one-shot persuasion from \citet{de2022non}. Given bias $D$, the maximization problem for S can be written in a similar way to (\ref{eqn-V}) as follows:
\begin{eqnarray}\label{eqn-V-biased}
     V_D(\mu_0) &:=& \max_{\tau}~~ \sum_{\mu\in T} \tau(\mu) \EE_{\omega\sim \mu}[v(\omega,\hat a(D(\mu_0,\mu)))]\\
     &&\textit{s.t. } \sum_{\mu\in T} \tau(\mu)\mu = \mu_0. \nonumber
\end{eqnarray}
Indeed, when S holds posterior $\mu$ the belief of R is $D(\mu_0,\mu)$, so they would choose the action $\hat a(D(\mu_0,\mu))$ and the expected payoff for S would be $\EE_{\omega\sim \mu}[v(\omega,\hat a(D(\mu_0,\mu)))]$. Therefore, (\ref{eqn-V-biased}) expresses the highest possible expected payoff for S over all Bayes plausible posterior distributions. We denote $\hat v_D(\mu) = \EE_{\omega\sim \mu}[v(\omega,\hat a(D(\mu_0,\mu)))]$, so that (\ref{eqn-V-biased}) becomes just like a standard persuasion problem with the function $\hat v$ replaced by $\hat v_D$. 

\subsubsection{Conservative Bayesianism}

To illustrate some of our results we will use a particular type of bias function, denoted $D_\alpha$, which describes conservative Bayesianism: There is a bias parameter $\alpha\in (0,1)$, so that given a prior belief $\nu$ and a Bayesian posterior $\nu'$ the belief is
\begin{equation}\label{eqn-bias-CB}
    D_\alpha(\nu,\nu') = \alpha \nu+(1-\alpha)\nu'.
\end{equation}
It is clear that $D_\alpha$ is interior and satisfies eventual learning. We point out a couple of additional properties it satisfies: First, given a prior $\nu$ and $\alpha$, for a biased agent to hold some belief $\nu''$ after updating there must exist a belief $\nu'$ such that $\nu''=\alpha \nu+(1-\alpha)\nu'$. Equivalently, we need that $\nu'=\frac{1}{1-\alpha}[\nu''-\alpha\nu]\in \Delta(\Omega)$, which is satisfied if and only if $\nu''(\omega)\ge\alpha\nu(\omega)$ for every $\omega\in \Omega$.

Second, if $\tau$ is any posterior distribution that satisfies the Bayes plausibility constraint $\sum_{\nu'\in T} \tau(\nu')\nu' = \nu$, then after applying the bias to each belief $\nu'\in T$ the resulting posterior distribution would still satisfy this constraint. The next lemma immediately follows from these properties.

\begin{lemma}\label{lemma-feasible}
    Fix a prior $\nu$ and suppose that R is a conservative Bayesian with bias $D_\alpha$. A (biased) posterior distribution $\tau$ can be induced by some information structure $\sigma$ if and only if (i) $\sum_{\nu''\in T} \tau(\nu'')\nu'' = \nu$, and (ii) $\nu''\ge \alpha \nu$ for every $\nu''\in T$.
\end{lemma}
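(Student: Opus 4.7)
The plan is to prove both implications by a direct translation between the Bayesian posterior distribution $\tau$ generated by an information structure and the biased posterior distribution $\tau'$ that the receiver actually holds. The key observation is that the map $D_\alpha(\cdot;\mu_0)$ in equation (\ref{eqn-bias-CB}) is an affine bijection between $\Delta(\Omega)$ and the set $\{\mu'\in\Delta(\Omega):\mu'\ge\alpha\mu_0\}$, with inverse $\mu=\frac{1}{1-\alpha}[\mu'-\alpha\mu_0]$. Once this is clear, the lemma essentially reduces to the standard Kamenica--Gentzkow characterization that Bayes-plausible distributions over posteriors are exactly those induced by some information structure.

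For the "only if" direction, I would start from an information structure that induces some Bayes-plausible Bayesian posterior distribution $\tau$ with support $T\subseteq \Delta(\Omega)$, so that $\sum_{\mu\in T}\tau(\mu)\mu=\mu_0$. The biased posterior distribution $\tau'$ is then the pushforward of $\tau$ under $D_\alpha(\cdot;\mu_0)$, supported on $T'=\{D_\alpha(\mu;\mu_0):\mu\in T\}$. Property (ii) is immediate because $\mu\ge 0$ implies $\mu'=\alpha\mu_0+(1-\alpha)\mu\ge\alpha\mu_0$. Property (i) follows by linearity:
\[
\sum_{\mu'\in T'}\tau'(\mu')\mu'=\sum_{\mu\in T}\tau(\mu)\bigl[\alpha\mu_0+(1-\alpha)\mu\bigr]=\alpha\mu_0+(1-\alpha)\mu_0=\mu_0.
\]

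For the "if" direction, I would take $\tau'$ satisfying (i) and (ii) and construct, for each $\mu'\in T'$, the candidate Bayesian posterior $\mu=\frac{1}{1-\alpha}[\mu'-\alpha\mu_0]$. The coordinates are non-negative by (ii), and they sum to one because $\sum_\omega\mu(\omega)=\frac{1}{1-\alpha}(1-\alpha)=1$, so $\mu\in\Delta(\Omega)$. Defining $\tau$ on $T=\{\mu:\mu'\in T'\}$ by $\tau(\mu)=\tau'(\mu')$ and using (i) gives $\sum_{\mu\in T}\tau(\mu)\mu=\frac{1}{1-\alpha}[\mu_0-\alpha\mu_0]=\mu_0$, so $\tau$ is Bayes plausible. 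By the standard Kamenica--Gentzkow result, some information structure induces $\tau$, and under the biased updating rule this same information structure induces exactly $\tau'$.

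I do not expect any real obstacle here: the bias is affine in the Bayesian posterior, so Bayes plausibility is preserved in both directions, and the only substantive feasibility constraint on the image is the coordinatewise lower bound $\mu'\ge\alpha\mu_0$ needed to invert $D_\alpha$ back into the simplex. The mildly delicate point is simply to verify that the inverse image lies in $\Delta(\Omega)$ (non-negativity from (ii), unit sum from the fact that both $\mu'$ and $\mu_0$ sum to one), which is exactly what the hypothesis is designed to guarantee.
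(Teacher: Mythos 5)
Your proof is correct and follows exactly the route the paper intends: the paper states the lemma "immediately follows" from the two properties listed just before it (the invertibility condition $\mu'\ge\alpha\mu_0$ for $D_\alpha(\cdot;\mu_0)$ and the preservation of Bayes plausibility under the affine bias map), which are precisely the two observations you formalize and combine with the Kamenica--Gentzkow characterization. No gaps.
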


\subsection{Sequential persuasion}\label{subsec-sequential}

We expand the set of strategies available to S to allow for sequential persuasion. A \emph{sequential persuasion strategy} $\sigma$ specifies either the information structure from $\Sigma_X$ to be used or the termination of the persuasion process as a function of the history of realized signals. We require that persuasion terminates in finite time, although the length of the persuasion process may depend on the realization of signals. Figure \ref{fig-seq-strategy} illustrates a hypothetical sequential persuasion strategy that terminates after a maximum of four periods.

\begin{figure}
\centering
\begin{tikzpicture}[scale = 0.7,
  level 1/.style={sibling distance=70mm, level distance=12mm},
  level 2/.style={sibling distance=50mm, level distance=15mm},
  level 3/.style={sibling distance=30mm, level distance=15mm},
  every node/.style={align=center, minimum size=0.5cm, font=\footnotesize}
  ]
  \node {\small\textit{S begins}}
    child {node {$x$}
      child {node {$x$} child {node {$STOP$}}}
      child {node {$x'$}
        child {node {$x$} child {node {$STOP$}}}
        child {node {$x'$} child {node {$STOP$}}}
      }
    }
    child {node {$x'$} child {node {$STOP$}}};
\end{tikzpicture} 
\caption{An illustration of a sequential persuasion strategy. The set of possible signals is $X=\{x,x'\}$. If $x'$ is realized in the first period then the strategy terminates, and if $x$ is realized then another information structure is selected; if $x$ is realized again the strategy terminates, and if $x'$ is realized then a third and final information structure is used. Note that the distribution over $X$ (conditional on each $\omega$) may differ after different histories.}
 \label{fig-seq-strategy}
\end{figure}

When information arrives sequentially, we need to specify how R updates their belief to reach a final posterior. For most of the paper we focus on an updating procedure in which the bias is applied after each signal, and the belief at the end of the previous period serves as the prior both for the purpose of calculating the new Bayesian posterior and as the first argument in the bias $D$.\footnote{In subsection \ref{subsec-other-updating-procedures} we briefly discuss alternative updating procedures.} Formally, let $\mu^\sigma(x_1,\ldots,x_n)$ be R's belief after $(x_1,\ldots,x_n)$ are observed and given that S uses strategy $\sigma$. After the first signal we have $\mu^\sigma(x_1)=D(\mu_0, \mu)$, where $\mu$ is the Bayesian posterior calculated from the prior $\mu_0$ and the realized signal $x_1$ (and given $\sigma$). For any subsequent signal we define the belief recursively by $\mu^\sigma(x_1,\ldots,x_n)=D(\mu^\sigma(x_1,\ldots,x_{n-1}), \mu)$, where $\mu$ is the Bayesian posterior calculated from the prior $\mu^\sigma(x_1,\ldots,x_{n-1})$ and the realized signal $x_n$.

Just as in the case of one-shot persuasion, any sequential persuasion strategy $\sigma$ induces a joint distribution over the posteriors of S and R, which we again denote by\footnote{We emphasize that to derive $\lambda_\sigma^D$ the probability of any sequence of signals $(x_1,\ldots,x_n)$ is calculated as $\sum_\omega \mu_0(\omega)\prod_{k=1}^n\sigma_{x_1,\ldots,x_{k-1}}(x_k|\omega)$, where $\sigma_{x_1,\ldots,x_{k-1}}$ denotes the information structure used conditional on this history of signals. In other words, R's bias does not affect the likelihood of signal realizations, only the induced posterior.} $\lambda_\sigma^D\in\Delta(\Delta(\Omega)\times \Delta(\Omega))$. As before, the marginal of $\lambda_\sigma^D$ on the first coordinate is $\tau_\sigma$ --- the Bayesian posterior distribution for S, and the marginal on the second coordinate is $\tau_\sigma^D$ --- the biased posterior distribution for R. Thus, the payoff to S from any strategy $\sigma$ is given by
\begin{equation}\label{eqn-payoff-sequential}
    \int \EE_{\omega\sim\mu}[v(\omega, \hat a(\mu'))] d\lambda_\sigma^D(\mu,\mu').
\end{equation}

The following example illustrates the updating procedure and the resulting joint posterior distribution in the case where R is a conservative Bayesian.

\begin{example}\label{example-PbP-N}
Suppose that $\Omega=\{\omega_0,\omega_1\}$ and let $\mu=\mu(\omega_1)\in [0,1]$ be the belief that the state is $\omega_1$. Let the prior be $\mu_0=0.5$ and fix the bias parameter $\alpha=0.5$. Consider a 2-period sequential persuasion strategy $\sigma$ such that in each period the signal is either $x^0$ or $x^1$ and the probability that the signal matches the state is $\beta\in [0.5,1]$. Note that in $\sigma$ the information structure in the second period does not depend on the realized signal in the first period.

Suppose first that the signal $x^1$ is realized in the first period. The Bayesian posterior after this signal is $\mu=\beta$, so the biased posterior after the first signal is $\mu^\sigma(x^1)=0.5*\mu_0+ 0.5*\beta=0.25+0.5\beta$. The latter serves as the prior for the next period. Thus, if $x^1$ is observed again then the Bayesian posterior after the second signal is $\mu=\frac{(0.25+0.5\beta)*\beta}{(0.25+0.5\beta)*\beta+(0.75-0.5\beta)*(1-\beta)}$, and the final biased posterior is given by 
$$\mu^\sigma(x^1,x^1)=0.5*\mu^\sigma(x^1)+0.5*\frac{(0.25+0.5\beta)*\beta}{(0.25+0.5\beta)*\beta+(0.75-0.5\beta)*(1-\beta)}.$$
If instead $x^0$ is observed in the second period then the Bayesian posterior is $\mu=\frac{(0.25+0.5\beta)*(1-\beta)}{(0.25+0.5\beta)*(1-\beta)+(0.75-0.5\beta)*\beta}$, and the final biased posterior is given by 
$$\mu^\sigma(x^1,x^0)=0.5*\mu^\sigma(x^1)+0.5*\frac{(0.25+0.5\beta)*(1-\beta)}{(0.25+0.5\beta)*(1-\beta)+(0.75-0.5\beta)*\beta}.$$
Similar calculations yield the posteriors $\mu^\sigma(x^0,x^0)$ and $\mu^\sigma(x^0,x^1)$. 

To complete the description of the biased posterior distribution $\tau_\sigma^D$, we also need to calculate the likelihood of each of these four possible posteriors. Note that this calculation is made from the point of view of S and is not affected by the bias. Thus, the probability that $(x^1,x^1)$ is realized is $0.5*\beta^2+0.5*(1-\beta)^2$, the probability that $(x^1,x^0)$ is realized is $\beta(1-\beta)$, etc. See Figure \ref{fig-biased-unbiased-posterior} for the induced posterior distributions $\tau_\sigma$ and $\tau_\sigma^D$ when $\beta = 0.75$.
\hfill \qed
\end{example}

\begin{figure}[ht]
\centering
\begin{tikzpicture}[scale=0.65]
\begin{axis}[
    ybar,
    bar width=5pt,
    width=20cm,
    height=10cm,
    enlargelimits=0,
    ymin=0,
    ymax=0.5,
    xmin=0,
    xmax=1,
    xlabel={\Large Posterior belief for $\omega_1$},
    ylabel={\Large Probability of posterior},
    xtick={0.1,0.27,0.45,0.5,0.55,0.73,0.9,1},
    xticklabels={0.1\\$x^0x^0$,0.27\\$x^0x^0$,0.49\\$x^0x^1$,0.5\\$x^0x^1$\\/$x^1x^0$,0.51\\$x^1x^0$,0.73\\$x^1x^1$,0.9\\$x^1x^1$,1},
    xticklabel style={align=center, font=\small},
    tick label style={font=\small},
    bar shift=0pt,
    nodes near coords,
    legend style={at={(0.5,-0.30)}, anchor=north, legend columns=-1}
]

\addplot+[ybar, fill=blue!80] 
  coordinates {(0.9,0.3125) (0.5,0.375) (0.1,0.3125)};   

\addplot+[ybar, fill=red!40] 
   coordinates {(0.2708,0.3125) (0.55,0.1875) (0.7292,0.3125) (0.45,0.1875)};
   
\legend{Bayesian(S), Biased(R)}

\end{axis}
\end{tikzpicture}
\caption{The posterior distributions $\tau_\sigma^D$ (red) and $\tau_\sigma$ (blue) for the 2-period sequential persuasion strategy of Example \ref{example-PbP-N} when $\beta=0.75$. By considering the pair of signals that induce each posterior, one can also visualize the joint distribution $\lambda_\sigma^D$. Note that the sequences $(x^0,x^1)$ and $(x^1,x^0)$ induce the same Bayesian posterior but different biased posteriors.}
\label{fig-biased-unbiased-posterior}
\end{figure}

\subsection{Optimal versus beneficial sequential persuasion}

In the static case of one-shot persuasion, the key to the applicability of the concavification method (\ref{eqn-V-biased}) is that the Bayesian posterior $\mu$ held by S uniquely pins down the biased belief $D(\mu_0,\mu)$ held by R, and hence also pins down the action that R takes at this belief. In contrast, with sequential persuasion this is no longer true, as there can be different sequences of signals that lead to the same Bayesian posterior for S but to different biased posteriors for R. We have already seen this in Example \ref{example-PbP-N} above: The sequences of signals $(x^1,x^0)$ and $(x^0,x^1)$ both result in the same Bayesian posterior of $0.5$, but R's beliefs differ after these two realizations (see Figure \ref{fig-biased-unbiased-posterior}).\footnote{\citet{hoffman2011simultaneous} experimentally show that subjects' posteriors often change depending on the order in which signals are presented.} If R's optimal action at the posterior induced by $(x^1,x^0)$ is different from the one at the posterior induced by $(x^0,x^1)$, then we would not be able to express the payoff of S (\ref{eqn-payoff-sequential}) in the same way as in (\ref{eqn-V-biased}), and hence would not be able to neatly characterize the optimal payoff.

The failure of systematic belief distortion makes it challenging to find optimal sequential persuasion strategies even in simple environments.\footnote{See Subsection \ref{subsec-first-best} for more on the difficulty of solving for an optimal sequential persuasion strategy.} Instead, we demonstrate the relevance of sequential persuasion by proving that in some of the workhorse models considered in the literature the ability to send multiple signals is beneficial to S. Furthermore, we explore the different mechanisms that make sequential persuasion outperform static persuasion and relate them to properties of the bias $D$. 

For a given persuasion environment, prior $\mu_0$, and bias $D$, we say that \emph{S gains from sequential persuasion} if there exists a sequential persuasion strategy $\sigma$ that yields a strictly higher payoff for S than $V_D(\mu_0)$ given in (\ref{eqn-V-biased}). 


\section{Two actions}\label{sec-2-actions}

We start with the class of persuasion environments that have received the most attention in the literature: R has only two actions to choose from and S has `transparent motives' \citep{lipnowski2020cheap}, i.e., S's preferences are state-independent.\footnote{When S has transparent motives we write $v(a)$ for the utility of S when R takes action $a$.} More precisely, we consider the following environments.

\begin{definition}\label{def-2-actions}
In a \textbf{2-action persuasion environment}, the set of actions for R is $A=\{a,a'\}$ and S has transparent motives with $v(a)>v(a')$. The prior $\mu_0$ satisfies $\hat a(\mu_0)=a'$, and there is some interior $\mu$ such that $\hat a(\mu)=a$.
\end{definition}

To understand how sequential persuasion can be beneficial in 2-action environments, as well as to gain some intuition for the general result that follows, we start with the classic Prosecutor-Judge example of \citet{KamenicaGentzkow2011}.

\begin{example}\label{example-2-actions}
Let $\Omega=\{I,G\}$ and denote by $\mu=\mu(G)$ the probability of state $G$. Suppose that $\hat{a}(\mu)=a$ for $\mu\ge 0.5$ and $\hat{a}(\mu)=a'$ otherwise. The prior is $\mu_0=0.25$. Assume that R is a conservative Bayesian with $\alpha=0.5$. 

Following the same logic as in the standard Bayesian case, it is easy to see that the optimal one-shot persuasion strategy is to induce Bayesian posteriors of $\mu_1=0$ (with probability 2/3) and $\mu_2=0.75$ (with probability 1/3). This induces biased posteriors of $D(\mu_0,\mu_1)=0.125$ and $D(\mu_0,\mu_2)=0.5$, so the probability of R choosing action $a$ (S's favorite) is 1/3.

However, in case the realized Bayesian posterior turns out to be $\mu_1$ (which in particular implies that the state is $I$), the fact that R's belief remains interior opens up the possibility to continue the persuasion process. Specifically, suppose that in this case S repeatedly uses binary information structures with signals $\{i,g\}$ such that the signal matches the state with probability $1-\epsilon$ for some small $\epsilon>0$. Since the state is $I$, in each period the signal $g$ is realized with probability $\epsilon$, in which case the belief of R drifts towards the $G$ state. After sufficiently many realizations of $g$ R's belief will cross the 0.5 threshold leading them to choose $a$. Therefore, the overall probability of $a$ is greater than in the optimal one-shot persuasion strategy. 
\hfill \qed
\end{example}

To generalize the above example, we define an additional property of the bias $D$ that is a weakening of classic Bayes plausibility. This property has been used in \citet{cripps2018divisible} as one of the axioms that characterize Bayesian updating, but here it turns out to be necessary and sufficient for S to gain from sequential persuasion in every 2-action environment.

\begin{definition}\label{def-convex-hull}
The bias $D$ is \textbf{weakly Bayes plausible} if for every prior $\nu$ and every finite collection $\{\nu'_1,\ldots,\nu'_K\}$ such that $\nu\in Conv(\{\nu'_1,\ldots,\nu'_K\})$, we have that 
$$\nu\in Conv(\{D(\nu,\nu'_1),\ldots,D(\nu,\nu'_K)\}).$$
\end{definition}

Note that conservative Bayesianism is weakly Bayes plausible since it is Bayes plausible (recall Lemma \ref{lemma-feasible}). It is not hard to construct biases that satisfy only the weak version --- as a simple example take a generalized version of conservative Bayesianism in which the parameter $\alpha$ changes in an arbitrary way with the pair $\nu,\nu'$. Our first main result follows.


\begin{proposition}\label{prop-transparent-2-actions}
Suppose that $D$ is interior and satisfies eventual learning. Then S gains from sequential persuasion in every 2-action persuasion environment if and only if $D$ is weakly Bayes plausible. 
\end{proposition}


\section{Linear preferences}\label{sec-linear}

The next class of persuasion environments we consider is the following.

\begin{definition}\label{def-linear}
In a \textbf{linear persuasion environment}, the state space is $\Omega\subseteq \RR^k$ for some $k\ge 1$, and $Conv(\Omega)$ is a set of full dimension. The set of actions is $A=\RR^k$, and $\hat a(\mu) = \EE_{\omega\sim\mu} [\omega]$ for every $\mu\in \Delta(\Omega)$. S has transparent motives, and $v(a)=\beta\cdot a$ for some non-zero vector $\beta\in \RR^k$. 
\end{definition}

Thus, in a linear environment states are real vectors and the optimal action for R is the mean of their belief; such receiver preferences have been used extensively in the literature \citep[e.g.,][]{crawford1982strategic}. The sender has state-independent linear preferences over the actions of the receiver. Notice that if R is Bayesian then persuasion is pointless --- when S induces posterior distribution $\tau$ their payoff is 
$$\sum_{\mu\in T}\tau(\mu)(\EE_{\omega\sim\mu} [\omega]\cdot \beta) = \beta\cdot \left(\sum_{\mu\in T}\tau(\mu)\EE_{\omega\sim\mu} [\omega]\right) = \beta \cdot \EE_{\omega\sim\mu_0} [\omega],$$
where the last equality follows from Bayes plausibility. Hence, the payoff of S is constant in the information they provide. 

Furthermore, when R is a conservative Bayesian their posterior distribution is also Bayes plausible (Lemma \ref{lemma-feasible}), so no one-shot persuasion strategy would benefit S. However, with sequential persuasion S and R typically hold different beliefs after the first period, which may allow S to achieve action distributions that were not feasible in the Bayesian case and potentially increase their payoff. The following example illustrates this point.

\begin{example}\label{example-linear}
Consider the following linear persuasion environment. The states are $\Omega=\{0,1\}\subseteq \RR$ and the payoff for S is $v(a)=a$. Let $\mu$ denote the belief that the state is $\omega=1$ and let the prior be $\mu_0=0.5$. Suppose that R is a conservative Bayesian with $\alpha=0.5$. 

As pointed out above, with a single period S cannot benefit from persuasion so their optimal payoff is 0.5. Consider the following 2-period sequential strategy: In the first period the information structure fully reveals the state; in the second period, if the state was revealed to be $\omega=0$ then persuasion terminates; and if the state was revealed to be $\omega=1$ then another fully revealing signal is sent.

Under this persuasion strategy R's posterior is $\mu_1=1/4$ with probability 0.5 (when the state is $\omega=0$) and $\mu_2=7/8$ with probability 0.5 (when the state is $\omega=1$). Thus, the payoff for S is $0.5*1/4+0.5*7/8>0.5$.\footnote{Of course, S can continue to send fully revealing signals when the state is $\omega=1$, and the posterior of R would converge to $\mu=1$. The limit payoff for S is $0.5*1/4+0.5*1 = 5/8$. It turns out that this is not the optimal sequential persuasion strategy and S can achieve an even higher payoff --- see Subsection \ref{subsec-first-best} for more on this example.}
\hfill \qed
\end{example}

As the above example shows, gaining from sequential persuasion when R is a conservative Bayesian is relatively easy because one-shot persuasion has no merit. For more general bias functions $D$, we should also consider the case where optimal one-shot persuasion is non-trivial. The next property takes care of such cases.

\begin{definition}\label{def-balanced}
The bias $D$ is \textbf{balanced} if for every prior $\nu$ and every finite collection $\{\nu_1',\ldots,\nu'_K\}$ such that $\nu\in Conv(\{\nu_1',\ldots,\nu'_K\})$ we have that 
$$\underline{0} \in Conv(\{D(\nu,\nu_1')-\nu'_1, \ldots, D(\nu,\nu_K')- \nu'_K\}),$$
where $\underline{0}$ is the zero vector in $\RR^\Omega$. 
\end{definition}

To understand the above definition, note that the difference $D(\nu,\nu_k')-\nu'_k\in \RR^\Omega$ is a vector that points in the direction from the Bayesian posterior $\nu'_k$ to the biased one $D(\nu,\nu_k')$, i.e., it is the direction of the bias given prior $\nu$ and posterior $\nu'_k$. When $D$ is balanced the directions of the bias resulting from the various posteriors of an information structure are not one-sided --- they are sufficiently spread out so that one cannot strictly include all of them in a half-space through the origin. For example, in the case of two posteriors that contain the prior in their convex hull, the condition requires that the directions of the two biases are exactly opposite. Note that conservative Bayesianism is balanced, as is any generalized version of it in which the weight $\alpha$ varies with the prior $\nu$ and the Bayesian posterior $\nu'$. 

We can now state the result of this section.

\begin{proposition}\label{prop-transparent-linear}
Suppose that $D$ is interior, satisfies eventual learning of states, and is balanced. Then S gains from sequential persuasion in every linear persuasion environment.
\end{proposition}


\section{Common preferences}

We next consider the case of a benevolent S whose preferences are identical to those of R. The following proposition characterizes environments in which S gains from sequential persuasion under our minimal assumptions on the bias $D$. 

\begin{proposition}\label{prop-common-pref}
Suppose that S and R have the same preferences, $u(\omega,a)=v(\omega,a)$ for all $\omega$ and $a$. Assume that $D$ is interior and satisfies eventual learning of states. Then S gains from sequential persuasion if and only if 
\begin{equation}\label{eqn-common-pref}
    V_D(\mu_0) < \sum_{\omega\in \Omega} \mu_0(\omega)\max_a\{v(\omega,a)\}.
\end{equation}
\end{proposition}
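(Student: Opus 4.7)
The plan is to sandwich the sender's payoff from any strategy between $V_\alpha(\mu_0)$ and the ``first-best'' value $V^{FB}(\mu_0):=\sum_\omega \mu_0(\omega)\max_a v(\omega,a)$, which equals the right-hand side of (\ref{eqn-common-pref}). The ``only if'' direction is the easy one: for any strategy (static or sequential), if $a$ denotes the action R eventually chooses, then since S is Bayesian the ex-ante expected payoff is $\sum_\omega \mu_0(\omega)\,\EE[v(\omega,a)\mid \omega]\le \sum_\omega \mu_0(\omega)\max_{a'} v(\omega,a') = V^{FB}(\mu_0)$. Hence whenever $V_\alpha(\mu_0)=V^{FB}(\mu_0)$ (i.e.\ when (\ref{eqn-common-pref}) fails), no strategy can strictly beat $V_\alpha(\mu_0)$.

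For the ``if'' direction I would exhibit an explicit strategy that attains exactly $V^{FB}(\mu_0)$. Consider $\sigma^n$: regardless of history, use a fully revealing experiment (take $X=\Omega$ with $P(x=\omega\mid \omega)=1$) in each of $n$ periods, then stop. If the true state is $\omega$, the signal $\omega$ is observed in every period, so R's period-$k$ Bayesian posterior (computed from prior $\mu_{k-1}$) is $\delta_\omega$, and PbP-N updating gives $\mu_k=\alpha\mu_{k-1}+(1-\alpha)\delta_\omega$. By induction $\mu_n=\alpha^n\mu_0+(1-\alpha^n)\delta_\omega$, which tends to $\delta_\omega$ as $n\to\infty$ since $\alpha<1$.

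The key remaining step is a continuity argument on R's optimal action. Because $u=v$, R's optimal-action set at $\delta_\omega$ is exactly $A^*_\omega:=\arg\max_a v(\omega,a)$, and every action in $A^*_\omega$ gives S the payoff $\max_a v(\omega,a)$. Since $A$ is finite and $\EE_\mu[u(\omega,a)]$ is linear (hence continuous) in $\mu$, a standard upper-hemicontinuity argument shows that R's chosen action lies in $A^*_\omega$ for all $\mu$ in some neighborhood of $\delta_\omega$. Choosing $n$ large enough so that $\mu_n$ lies in this neighborhood for every $\omega\in\Omega$ (possible because $\Omega$ is finite), $\sigma^n$ yields S the expected payoff $\sum_\omega \mu_0(\omega)\max_a v(\omega,a)=V^{FB}(\mu_0)$. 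If (\ref{eqn-common-pref}) holds, this strictly exceeds $V_\alpha(\mu_0)$, completing the proof.

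The main obstacle is the continuity step: one has to rule out that R switches away from $A^*_\omega$ at beliefs arbitrarily close to $\delta_\omega$. The tie-breaking rule is not an issue here because $u=v$ makes R's and S's rankings coincide, so any tie at or near $\delta_\omega$ still yields payoff $\max_a v(\omega,a)$ for S; everything reduces to the fact that each $a\notin A^*_\omega$ is strictly suboptimal at $\delta_\omega$ and hence on a neighborhood of it.
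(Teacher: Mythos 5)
Your ``only if'' direction is correct and in fact slightly more direct than the paper's: you bound any sequential payoff by the first-best $\sum_\omega \mu_0(\omega)\max_a v(\omega,a)$ pointwise in $\omega$, whereas the paper first passes through the payoff S would get against a Bayesian receiver and then bounds that by full information. Both work.

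The ``if'' direction has a genuine gap. Your key step --- ``Since $A$ is finite \dots R's chosen action lies in $A^*_\omega$ for all $\mu$ in some neighborhood of $\delta_\omega$'' --- imports an assumption the model does not make. The paper allows infinite action sets (Proposition \ref{prop-CS} has $A=\RR$, Example \ref{example-linear} has $A=[0,1]$), and with $b=0$ the quadratic-loss environment is itself a common-preferences instance of this proposition. There the optimal action at $\mu_n=\alpha^n\mu_0+(1-\alpha^n)\delta_\omega$ is $\EE_{\mu_n}[\omega']\neq\omega$, so R's action is \emph{never} in $A^*_\omega=\{\omega\}$ at any finite $n$, the upper-hemicontinuity argument fails, and your strategy does not attain $V^{FB}(\mu_0)$ exactly. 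The conclusion ``$\sigma^n$ yields S the expected payoff $\sum_\omega\mu_0(\omega)\max_a v(\omega,a)$'' is therefore false in general. The repair is exactly what the paper does: note that under common preferences $\hat v(\mu)=\max_a\EE_{\omega\sim\mu}[u(\omega,a)]$ is a pointwise supremum of affine functions and is continuous on the simplex, so the payoff $\sum_\omega\mu_0(\omega)\hat v(\alpha^n\mu_0+(1-\alpha^n)\delta_\omega)$ merely \emph{converges} to the right-hand side of (\ref{eqn-common-pref}) as $n\to\infty$; the strictness of the inequality in (\ref{eqn-common-pref}) then guarantees that some finite $n$ already beats $V_\alpha(\mu_0)$. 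Your argument as written is fine only in the special case of finite $A$, where exact attainment does hold for large $n$.
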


The right-hand side of (\ref{eqn-common-pref}) is the expected payoff for S (and R) under full information. Since we assume $D$ satisfies eventual learning of states, S can achieve a payoff arbitrarily close to full information by repeatedly sending fully revealing signals. This implies that (\ref{eqn-common-pref}) is sufficient for S to gain from sequential persuasion. Conversely, since S and R have the same preferences, it is straightforward to show that with any persuasion strategy the payoff to S is weakly higher when R is Bayesian than when R is biased (this does not require any assumption on $D$). Hence, the full-information payoff is an upper bound for the payoff from any sequential persuasion strategy, so (\ref{eqn-common-pref}) is also necessary. 

While Proposition \ref{prop-common-pref} provides a characterization, checking whether (\ref{eqn-common-pref}) holds requires solving for $V_D(\mu_0)$, which may not be easy.\footnote{Note that fully revealing the state is not necessarily the optimal one-shot strategy for S. In fact, it follows from a result of \citet{whitmeyer2024blackwell} that any distortion of Bayes rule (within the class of biases considered by \citet{de2022non}) causes violations of monotonicity with respect to the Blackwell order in some decision problem. Thus, S and R will sometimes disagree on which of two information structures is more valuable.} The next corollary provides a simple sufficient condition for S to gain from sequential persuasion. 

\begin{corollary}\label{coro-common-pref}
Under the assumptions of Proposition \ref{prop-common-pref}, if there exists a state $\bar\omega$ such that for all $\mu\in \Delta(\Omega)$
\begin{equation}\label{eqn-empty-intersect}
    \argmax_a u(\bar\omega,a) \bigcap \argmax_a \EE_{\omega\sim D(\mu_0,\mu)}[u(\omega,a)] = \emptyset,
\end{equation}
then S gains from sequential persuasion. 
\end{corollary}

We illustrate Corollary \ref{coro-common-pref} with the following example.

\begin{example}\label{example-common-pref}
   Let $\Omega=\{\omega_0,\omega_1\}$ and identify beliefs with $\mu=\mu(\omega_1)\in[0,1]$. Suppose the prior is $\mu_0=0.5$. R can choose from three actions, $\{a_0,a_1,b\}$, with payoffs given by $u(\omega_0,b)=u(\omega_1,b)=0$, $u(\omega_0,a_0)=u(\omega_1,a_1)=1$, $u(\omega_0,a_1)=u(\omega_1,a_0)=-2$. Thus, $a_0$ is optimal for $\mu\in[0,1/3]$, $a_1$ is optimal for $\mu\in[2/3,1]$, and $b$ is optimal for intermediate beliefs $\mu\in[1/3,2/3]$. Recall that $v$ is identical to $u$.

   Suppose that R is a conservative Bayesian with parameter $\alpha$. Then the set of feasible biased posteriors that S can induce with one period of persuasion is the interval $[\alpha/2, 1-\alpha/2]$. If $\alpha>2/3$ then this is not sufficient to convince R to choose any of the extreme actions $a_0$ or $a_1$, implying that condition (\ref{eqn-empty-intersect}) holds. By repeatedly sending information showing the realized state R will eventually become sufficiently confident to choose these actions, so S would gain from sequential persuasion in this case. If on the other hand $\alpha\le 2/3$ then fully revealing the state in the first period already generates the maximal possible payoff, so sequential persuasion has no additional value. In fact, if R is a conservative Bayesian and $\Omega$ contains two states then condition (\ref{eqn-empty-intersect}) is also necessary for S to gain from sequential persuasion.  
\end{example}


\section{Crawford-Sobel preferences}

Our last result concerns the classic environment analyzed by \citet{crawford1982strategic}.

\begin{definition}\label{def-CS}
In a \textbf{Crawford-Sobel environment}, the state space is $\Omega\subseteq \RR$ and the set of actions is $A=\RR$. The payoff functions are $u(\omega,a)=-(a-\omega)^2$ and $v(\omega,a)=-(a-(\omega+b))^2$ for some $b\in \RR$.
\end{definition}

\begin{proposition}\label{prop-CS}
Suppose that $D$ is interior, satisfies eventual learning of states, and is balanced. Then S gains from sequential persuasion in every Crawford-Sobel environment.
\end{proposition}

In the proof of the proposition we show that, starting from an optimal one-shot persuasion strategy, S can further increase their payoff by appropriately choosing one of the possible posteriors and sending additional information if this posterior is realized. For example, if R is a conservative Bayesian then one can show that S's value function $\hat v_\alpha(\mu) = \EE_{\omega\sim \mu}[v(\omega,\hat a(D_\alpha(\mu_0,\mu)))]$ is convex, and hence that the optimal one-shot persuasion strategy is to fully reveal the state.\footnote{\citet[Section 6.2]{kamenica2009bayesian} consider the same environment without the bias and show that $\hat{v}(\mu)$ is convex.} If the bias $b$ is positive, and if the state turns out to be the highest, then S can increase their payoff by sending a second fully revealing signal that further pushes up R's posterior and induces a higher action; for a negative $b$ the same argument applies when the state is the lowest. For more general biases $D$ the function $\hat v_D(\mu)$ need not be convex; instead, the balancedness of $D$ guarantees that a similar argument can be applied.


\section{Discussion}\label{sec-discussion}

\subsection{Optimal sequential persuasion and the sender's first-best}\label{subsec-first-best}

One may wonder whether sequential persuasion is so powerful that it allows S to (almost) achieve their first-best payoff, meaning that when persuasion terminates R chooses S's favorite action in each state $\omega$. The answer is typically `no' --- there are limits to what S can achieve even with arbitrarily long persuasion. 

To understand why, consider again Example \ref{example-linear} from Section \ref{sec-linear}: There are two states $\Omega=\{0,1\}$, a belief $\mu$ specifies the probability of state $\omega=1$, the prior is uniform $\mu_0=0.5$, and the optimal action for R at belief $\mu$ is $a=\mu$. A well-known property of Bayesian updating is that for any information structure, conditional on the state being $\omega=0$, the expected posterior is (weakly) lower than the prior \citep{good1965list, francetich2014bayesian}. In other words, for a Bayesian receiver, conditional on $\omega=0$ the posterior process induced by any sequential persuasion strategy is a super-martingale. Since a conservative Bayesian mixes between the posterior and the prior with fixed weights, the same property holds also for them. This implies that in state $\omega=0$ the expected posterior of R at the end of any sequential persuasion strategy is at most 0.5. In particular, since the preferences of S are given by $v(a)=a$, it follows that the best payoff S can obtain in state $\omega=0$ is 0.5. Overall, the payoff of S is bounded above by 
$\mu_0*1+(1-\mu_0)*0.5 = 0.75$, strictly below the first-best payoff of 1.\footnote{In the Prosecutor-Judge example of Section \ref{sec-2-actions} (Example \ref{example-2-actions}), the same argument combined with Markov inequality implies that in state $I$ the probability of $a$ (conviction) is at most $0.5$, and hence that the overall probability of $a$ is at most $0.25*1+0.75*0.5=0.625$.}

Since the first-best is not feasible, it is natural to ask what is the optimal payoff that S can achieve. This turns out to be a difficult question even in this very basic environment with linear (state-independent) S preferences and R being a conservative Bayesian. As we argued in Example \ref{example-linear} above, S can achieve a payoff arbitrarily close to $0.5*1+0.5*0.25=0.625$ by initially sending a fully revealing signal, stopping if it turns out that $\omega=0$, and continuing to send fully revealing signals if it turns out that $\omega=1$. However, S can do better than that with the following strategy: In the first period use a binary information structure with signals $\{x,x'\}$ and conditional probabilities $\Pr(x|\omega=1)=1$, $\Pr(x|\omega=0)\approx 0.415$. If $x'$ is realized then persuasion terminates, and if $x$ is realized then send a fully revealing signal and follow the strategy previously described (stopping in state $\omega=0$ and continuing to send fully revealing signals in state $\omega=1$). With this strategy the payoff of S converges to 0.635.

More generally, an optimal sequential persuasion strategy is a solution to a dynamic programming problem that has a complicated structure even in simple environments. While S typically cannot achieve their first-best, our results show the usefulness of dynamic information design without finding the optimum.

\subsection{Comparison with correlation neglect}\label{subsec-neglect}

A well-documented bias in processing of multiple signals is correlation neglect. The paper by \citet{levy2022persuasion} studies BP when R suffers from this bias and shows that S can induce essentially any state-contingent posterior distribution by sending many signals and correlating them appropriately. In particular, S can achieve their first-best payoff.   

We claim that the type of biases considered in this paper are fundamentally different than correlation neglect. The most obvious difference is that the receiver in \citet{levy2022persuasion} is a standard Bayesian when only one signal is sent, while our R is biased, and this one-period bias fully determines their updating behavior with multiple signals. Put differently, if our R were Bayesian with one-shot persuasion, then they would be Bayesian with sequential persuasion as well.

Second, with correlation neglect the order in which signals are received is irrelevant; in fact, one can imagine that the receiver obtains all the signals at the same time, which is implicitly assumed in \citet{levy2022persuasion}. For our R the order of signals certainly matters, as illustrated in Example \ref{example-PbP-N}. Moreover, the sequential persuasion strategies we consider in our results take advantage of this ordering effect and make future information structures contingent on past realizations. 

Third, the result of \citet{levy2022persuasion} stands in stark contrast to our claim above that S typically cannot achieve their first-best payoff. This difference clearly indicates that these are separate models that generate diverging predictions in persuasion problems.

\subsection{Other sequential updating procedures}\label{subsec-other-updating-procedures}

Our view is that R's sequential updating procedure assumed thus far best captures the spirit of biased updating and of the fact that information arrives sequentially: The prior at the beginning of each period reflects the information accumulated in previous periods, and this prior is a sufficient statistic for calculating the next posterior. However, there are other natural ways to extend a biased updating rule from static to dynamic setups.\footnote{\citet{sarnoff2025structure} shows experimentally that for a large fraction of subjects the posterior of the previous period is not a sufficient statistic for current period updating.}

One example is when the bias is applied only after the persuasion process is over: R updates their belief in a Bayesian manner along the entire sequence of signals, and if the final Bayesian posterior is $\mu$, then their biased posterior is $D(\mu_0,\mu)$. This procedure essentially ignores the dynamic nature of sender's persuasion strategy and hence makes it irrelevant: Since with Bayesian updating the set of feasible posterior distributions under sequential persuasion is the same as under one-shot persuasion, it immediately follows that S never gains from sequential persuasion.

Another less trivial procedure is when the bias is applied in every period, but the anchor for the bias is the original prior $\mu_0$ rather than the last period posterior. Formally, $\mu_n(x_1,\ldots,x_n)=D(\mu_0,\mu)$, where $\mu$ is the Bayesian posterior calculated from the prior $\mu_{n-1}(x_1,\ldots,x_{n-1})$ and the realized signal $x_n$. Under this procedure it is still the case that different sequences of signal realizations may lead to the same Bayesian posterior but to different biased posteriors, so that beliefs are not systematically distorted and the concavification method is not applicable. Overall, this procedure assigns greater prominence to the initial belief $\mu_0$, which can make it harder for S to gain from sequential persuasion: As an example, conservative Bayesianism does not satisfy eventual learning since the posterior is pulled back towards the original prior after every period. A more careful analysis of this procedure is left for future work.


\appendix

\section{Proofs}


\bigskip

\noindent \textbf{Proof of Proposition \ref{prop-transparent-2-actions}} 

\smallskip

Before proving the proposition we first prove a simple lemma. Given any belief $\nu\in \Delta(\Omega)$ and $\epsilon>0$ let 
$$B(\nu,\epsilon)=\{\nu':~ \|\nu'-\nu\|<\epsilon\}$$
be the open ball around $\nu$ with radius $\epsilon$ (here $\|\cdot\|$ stands for the Euclidean distance). Also, given some sequential persuasion strategy $\sigma$ and $\omega\in\Omega$ we denote by $\tau_\sigma^D(\cdot|\omega)$ R's posterior distribution conditional on the true state being $\omega$.

\begin{lemma}\label{lemma-converge}
Suppose that $D$ is interior and satisfies eventual learning. Let $\nu_0\in \Delta^o(\Omega)$ be the prior and consider some target belief $\nu^*\in \Delta^o(\Omega)$. For every $\epsilon>0$ there exists a sequential persuasion strategy $\sigma$ such that $\tau_\sigma^D(B(\nu^*,\epsilon)|\omega)>0$ for every state $\omega$.    
\end{lemma}

\begin{proof}
  Given any belief $\nu\in\Delta^o(\Omega)$ consider an information structure $I_\nu$ with two possible signals $x,x'$ and where the probability of $x$ in state $\omega$ is $\frac{\nu^*(\omega)}{\nu(\omega)}\min_{\omega'\in\Omega}\left(\frac{\nu(\omega')}{\nu^*(\omega')}\right)$. Since both $\nu$ and $\nu^*$ are interior these probabilities are well-defined and strictly positive. Also, if $\nu$ is the prior and $x$ is the realized signal then the resulting Bayesian posterior is $\nu^*$. 

  Fix $\epsilon>0$ and $N\in \mathbf{N}$. Consider the sequential strategy $\sigma$ defined as follows: In the first step the information structure is $I_{\nu_0}$; for every $1<n\le N$, if $x'$ is the realized signal in period $n-1$ then persuasion terminates, and if $x$ is the realized signal then information structure $I_{\nu_{n-1}}$ is used, where $\nu_{n-1}=D(\nu_{n-2},\nu^*)$; if $x$ was realized in all $N$ periods then persuasion terminates.

  Since $D$ satisfies eventual learning, if $N$ is sufficiently large and $x$ was realized in all $N$ periods then $\nu_{N+1}\in B(\nu^*,\epsilon)$. Since $x$ has positive probability in each state $\omega$ and for every $\nu_n$, the lemma follows.
\end{proof}

We can now prove the proposition.

\noindent (If) Fix a 2-action persuasion environment and let $\sigma^*$ be an optimal one-shot persuasion strategy. Recall that $\tau_{\sigma^*}$ and $\tau_{\sigma^*}^D$ are the Bayesian posterior distribution and R's posterior distribution induced by $\sigma^*$. Denote by $T$ the support of $\tau_{\sigma^*}$ and by $T^D$ the support of $\tau_{\sigma^*}^D$. 

First, we claim that there must be $\bar \mu\in T^D$ such that $\hat a(\bar \mu)=a'$. Indeed, since $\mu_0\in Conv(T)$, weak Bayes plausibility implies that $\mu_0\in Conv(T^D)$. But the set $\{\mu\in \Delta(\Omega):~ \hat a(\mu)=a\}$ is convex, so if $\hat a(\mu)=a$ for every $\mu\in T^D$ then we would also have $\hat a(\mu_0)=a$, contradicting the assumption that at the prior the optimal action for R is $a'$. 

Next, by the definition of a 2-action environment there is $\mu^*\in \Delta^o(\Omega)$ such that $\hat a(\mu^*)=a$, which implies that we can choose $\mu^*$ so that R strictly prefers $a$ to $a'$ at $\mu^*$. Let $\epsilon>0$ be small enough so that $\hat a(\mu)=a$ whenever $\mu\in B(\mu^*,\epsilon)$. By Lemma \ref{lemma-converge} there exists a sequential persuasion strategy $\bar \sigma$ such that, starting from the prior $\bar \mu$, induces a posterior in $B(\mu^*,\epsilon)$ with positive probability conditional on every $\omega\in \Omega$.    

Finally, consider the following sequential persuasion strategy. In the first period the optimal one-shot strategy $\sigma^*$ is used. If the posterior of $R$ is not $\bar \mu$ then persuasion terminates; and if the posterior is $\bar \mu$ then $\bar \sigma$ is used. With this strategy the probability that R chooses $a$ is strictly higher than with $\sigma^*$, so S gains from sequential persuasion.

\bigskip

\noindent (Only if) Suppose that $D$ is not weakly Bayes plausible. Then there exist a prior $\mu_0$ and a collection of posteriors $\{\mu_1,\ldots,\mu_K\}$ such that $\mu_0\in Conv(\{\mu_1,\ldots,\mu_K\})$ but $\mu_0\notin Conv(\{D(\mu_0,\mu_1),\ldots,D(\mu_0,\mu_K)\})$. Let $x\in\RR^\Omega$ be a vector that separates $\mu_0$ from $Conv(\{D(\mu_0,\mu_1),\ldots,D(\mu_0,\mu_K)\})$, namely,
$$\sum_\omega x(\omega)\mu_0(\omega)< \sum_\omega x(\omega)D(\mu_0,\mu_k)(\omega), ~~~ k=1,\ldots,K.$$

Consider a 2-action persuasion environment in which the utility function of R is given by $u(\omega,a)=x(\omega)$ and $u(\omega,a')=y$ for every $\omega$, where 
$$y=0.5 \sum_\omega x(\omega)\mu_0(\omega)+ 0.5 \min_{1\le k\le K}\sum_\omega x(\omega)D(\mu_0,\mu_k)(\omega).$$
Then, by construction, $\hat a(\mu_0)=a'$ and $\hat a(D(\mu_0,\mu_k))=a$ for each $k=1,\ldots,K$. In particular, all the requirements in Definition \ref{def-2-actions} are satisfied. 

Let $\sigma$ be a one-shot persuasion strategy such that the support of $\tau_\sigma$ is $\{\mu_1,\ldots,\mu_K\}$. Existence of such a strategy is guaranteed by the fact that $\mu_0\in Conv(\{\mu_1,\ldots,\mu_K\})$. When S uses $\sigma$ R chooses $a$ with probability 1, so S cannot gain from sequential persuasion.  
\hfill \qed


\bigskip

\noindent \textbf{Proof of Proposition \ref{prop-transparent-linear}} 

\smallskip

We start with the following lemma.

\begin{lemma}\label{lemma-full-dim}
    Fix a linear persuasion environment and suppose that $D$ is interior. For every $\mu\in\Delta^o(\Omega)$ there exists a state $\underline\omega\in \Omega$ such that $\beta\cdot\underline{\omega} < \beta\cdot \EE_{\omega\sim D(\mu,\delta_{\underline{\omega}})} [\omega]$.
\end{lemma}

\begin{proof}
Fix $\mu$ and let $\underline \omega$ be a state that satisfies $\beta\cdot \underline\omega\le \beta \cdot \omega$ for every $\omega\in \Omega$. Since $Conv(\Omega)$ is a set of full dimension there must be some $\bar\omega$ such that $\beta\cdot \underline\omega< \beta \cdot \bar \omega$ (otherwise all the states are on the same hyperplane orthogonal to $\beta$). We thus have
\begin{eqnarray*}
\beta\cdot \EE_{\omega\sim D(\mu,\delta_{\underline\omega})} [\omega] = \sum_{\omega\neq \bar\omega} D(\mu,\delta_{\underline\omega})(\omega)(\beta\cdot \omega)+ D(\mu,\delta_{\underline\omega})(\bar \omega)(\beta\cdot \bar\omega) > \\
(\beta \cdot \underline\omega)  \sum_{\omega\neq \bar\omega} D(\mu,\delta_{\underline\omega})(\omega) + (\beta\cdot \underline\omega) D(\mu,\delta_{\underline\omega})(\bar \omega) = \beta\cdot\underline \omega,
\end{eqnarray*}
where the strict inequality is due to the assumption that $D$ is interior and hence that $D(\mu,\delta_{\underline\omega})(\bar \omega)>0$.
\end{proof}

To prove the proposition we consider two possible cases.

\medskip

\noindent \underline{Case i}: $V_D(\mu_0)=\beta\cdot \EE_{\omega\sim\mu_0} [\omega]$ (S cannot increase their payoff with one-shot persuasion).

Denote 
$$C^-=\{\omega :~ \beta\cdot\omega < \beta\cdot \EE_{\omega'\sim D(\mu_0,\delta_{\omega})} [\omega']\}$$
and
$$C^+=\{\omega :~ \beta\cdot\omega \ge \beta\cdot \EE_{\omega'\sim D(\mu_0,\delta_{\omega})} [\omega']\}.$$
Notice that $C^-\neq \emptyset$ by Lemma \ref{lemma-full-dim} applied to $\mu=\mu_0$. 

Consider the following sequential persuasion strategy. In the first period a fully revealing signal is sent; if $\omega\in C^-$ then the strategy terminates; and if $\omega\in C^+$ then $n$ additional fully revealing signals are sent after which the strategy terminates. 

We claim that if $n$ is sufficiently large then this strategy increases S's payoff above the no information payoff $\beta\cdot \EE_{\omega\sim\mu_0} [\omega]$. Indeed, since $D$ satisfies eventual learning of states, we have that $\lim_n D^n(\mu_0,\delta_{\omega})=\delta_{\omega}$ for each $\omega\in C^+$. Therefore, as $n\to \infty$ the limit of the expected payoff to S with the above strategy is given by 
\begin{eqnarray*}
\sum_{\omega\in C^-} \mu_0(\omega) (\beta\cdot \EE_{\omega'\sim D(\mu_0,\delta_{\omega})} [\omega']) + \sum_{\omega\in C^+} \mu_0(\omega) (\beta\cdot \omega) &>& \\
\sum_{\omega\in C^-} \mu_0(\omega) (\beta\cdot\omega) &+& \sum_{\omega\in C^+} \mu_0(\omega) (\beta\cdot \omega) = \beta\cdot \EE_{\omega\sim\mu_0} [\omega],
\end{eqnarray*}
where the inequality is strict by the definition of the set $C^-$ and its non-emptiness. 

\medskip

\noindent \underline{Case ii}: $V_D(\mu_0)>\beta\cdot \EE_{\omega\sim\mu_0} [\omega]$ (S can increase their payoff with one-shot persuasion). 

Let $\sigma^*$ be an optimal one-shot persuasion strategy. We claim that there is $\bar \mu\in T$ ($T$ is the support of $\tau_{\sigma^*}$) such that 
\begin{equation}\label{eqn-bar-mu}
    \beta\cdot (\EE_{\omega\sim D(\mu_0,\bar \mu)}[\omega]-\EE_{\omega\sim \bar \mu}[\omega])\le 0.
\end{equation}
Indeed, since $D$ is balanced there are non-negative weights $\{\gamma_\mu\}_{\mu\in T}$ summing up to 1 such that 
$$\sum_{\mu\in T} \gamma_\mu D(\mu_0,\mu)=\sum_{\mu\in T} \gamma_\mu \mu.$$
This implies that 
$$\EE_{\omega\sim \sum_{\mu\in T} \gamma_\mu D(\mu_0,\mu))}[\omega] = \EE_{\omega\sim \sum_{\mu\in T} \gamma_\mu \mu}[\omega],$$
and therefore that 
$$\sum_{\mu\in T} \gamma_\mu\EE_{\omega\sim  D(\mu_0,\mu)}[\omega] = \sum_{\mu\in T} \gamma_\mu\EE_{\omega\sim  \mu}[\omega].$$
In other words, the set $Conv(\{\EE_{\omega\sim D(\mu_0, \mu)}[\omega]-\EE_{\omega\sim \mu}[\omega] :~ \mu\in T\})$ contains the origin of $\RR^K$. But then we cannot have $\beta\cdot (\EE_{\omega\sim D(\mu_0, \mu)}[\omega]-\EE_{\omega\sim \mu}[\omega])>0$ for every $\mu\in T$, proving that $\bar\mu$ as claimed exists.

Now, similarly to the above case i, define the sets 
$$C^-=\{\omega :~ \beta\cdot\omega <\beta\cdot \EE_{\omega'\sim D(D(\mu_0,\bar\mu),\delta_{\omega})} [\omega']\}$$
and
$$C^+=\{\omega :~ \beta\cdot\omega \ge \beta\cdot \EE_{\omega'\sim D(D(\mu_0,\bar\mu),\delta_{\omega})} [\omega']\}.$$
Applying Lemma \ref{lemma-full-dim} with $\mu=D(\mu_0,\bar\mu)$ we get that $C^-\neq \emptyset$.

Consider the following sequential persuasion strategy: In the first period $\sigma^*$ is used. If the realized posterior is any $\mu\neq \bar \mu$ then persuasion terminates. If $\bar \mu$ is the realized posterior then persuasion continues in a similar way to case (i): First, a fully revealing signal is sent; if $\omega\in C^-$ then the strategy terminates; and if $\omega\in C^+$ then $n$ additional fully revealing signals are sent after which the strategy terminates. 

Using eventual learning of states, as $n\to \infty$ the limit of the expected payoff to S when using this strategy is given by 
\begin{eqnarray*}
& & \sum_{\mu\in T\setminus \{\bar \mu\}}\tau_{\sigma^*}(\mu)(\beta\cdot \EE_{\omega\sim D(\mu_0,\mu))}[\omega]) +\\ 
& & \tau_{\sigma^*}(\bar \mu) \left[ \sum_{\omega\in C^-} \bar\mu(\omega) (\beta\cdot \EE_{\omega'\sim D(D(\mu_0,\bar\mu),\delta_{\omega})} [\omega']) + \sum_{\omega\in C^+} \bar\mu(\omega) (\beta\cdot \omega)\right] >\\
& & \sum_{\mu\in T\setminus \{\bar \mu\}}\tau_{\sigma^*}(\mu)(\beta\cdot \EE_{\omega\sim D(\mu_0,\mu))}[\omega]) + \tau_{\sigma^*}(\bar \mu) \left[ \sum_{\omega\in C^-} \bar\mu(\omega) (\beta\cdot \omega) + \sum_{\omega\in C^+} \bar\mu(\omega) (\beta\cdot \omega)\right] =\\
& & \sum_{\mu\in T\setminus \{\bar \mu\}}\tau_{\sigma^*}(\mu)(\beta\cdot \EE_{\omega\sim D(\mu_0,\mu))}[\omega]) + \tau_{\sigma^*}(\bar \mu)(\beta\cdot \EE_{\omega\sim \bar \mu}[\omega]) \ge\\
& & \sum_{\mu\in T\setminus \{\bar \mu\}}\tau_{\sigma^*}(\mu)(\beta\cdot \EE_{\omega\sim D(\mu_0,\mu))}[\omega]) + \tau_{\sigma^*}(\bar \mu)(\beta\cdot \EE_{\omega\sim D(\mu_0,\bar\mu)}[\omega]) = V_D(\mu_0),
\end{eqnarray*}
where the strict inequality is by the definition of the set $C^-$ and its non-emptiness, the weak inequality is by (\ref{eqn-bar-mu}), and the last equality is by the optimality of $\sigma^*$ for the one-shot persuasion problem. It follows that for sufficiently large $n$ the expected payoff for S under this strategy exceeds $V_D(\mu_0)$.
\hfill \qed


\bigskip

\noindent \textbf{Proof of Proposition \ref{prop-common-pref}} 

\smallskip

(If) Consider the sequential strategy in which S sends fully revealing signals for $N$ periods and then stops. If the true state is $\omega$, then the Bayesian posterior after each of the signals will be $\delta_\omega$. Thus, by eventual learning of states, for any $\epsilon>0$ we can choose $N$ sufficiently large so that R's belief is within $\epsilon$ from $\delta_\omega$. Since S and R have the same preferences $\hat v$ is the pointwise maximum of a family of linear functions and hence continuous. Therefore, as $N\to \infty$ the payoff for S converges to $\sum_\omega \mu_0(\omega)\hat v(\delta_\omega)$, which is precisely the right-hand side of (\ref{eqn-common-pref}).

(Only if) Consider any sequential strategy $\sigma$. The payoff for S from choosing $\sigma$ is
\begin{eqnarray*}
 \int \EE_{\omega\sim\mu}[v(\omega, \hat a(\mu'))] d\lambda_\sigma^D(\mu,\mu') &=&  \int \EE_{\omega\sim\mu}[u(\omega, \hat a(\mu'))] d\lambda_\sigma^D (\mu,\mu')\le \\
 & & \int \EE_{\omega\sim\mu}[u(\omega, \hat a(\mu))] d\tau_\sigma(\mu) =  \int \hat{v}(\mu) d\tau_\sigma(\mu),
\end{eqnarray*}
where the first equality is from $v=u$, the inequality is by the definition of $\hat a$, and the last equality follows from $u=v$ and the definition of $\hat v$. Note that the last expression is the payoff for S given strategy $\sigma$ when R is Bayesian. Since S and R have the same preferences this is bounded above by the full-information payoff on the right-hand side of (\ref{eqn-common-pref}). Thus, if (\ref{eqn-common-pref}) does not hold then there is no way for S to gain from sequential persuasion. 
\hfill \qed


\bigskip

\noindent \textbf{Proof of Corollary \ref{coro-common-pref}} 

\smallskip

By Proposition \ref{prop-common-pref} it is enough to show that if (\ref{eqn-empty-intersect}) holds then S cannot achieve the Bayesian full-information payoff with one period of persuasion. Getting the full information payoff requires that in each state $\omega$ only actions that are optimal at $\omega$ are chosen with positive probability. However, condition (\ref{eqn-empty-intersect}) implies that any action optimal at $\bar\omega$ is not optimal at any belief in the set $\{D(\mu_0,\mu) :~ \mu\in\Delta(\Omega)\}$. Since with any information structure the beliefs of R are always inside this set, it follows that non-optimal actions are chosen with positive probability in state $\bar \omega$. 

\hfill \qed


\bigskip

\noindent \textbf{Proof of Proposition \ref{prop-CS}} 

\smallskip

The proof is similar to that of Proposition \ref{prop-transparent-linear}, although here we do not need to consider the two cases separately. We assume that the bias is positive, $b>0$; the case $b<0$ follows by a symmetric argument, and the case $b=0$ follows from Corollary \ref{coro-common-pref}. The following lemma will be useful below.

\begin{lemma}\label{lemma-CS}
Fix a Crawford-Sobel environment and let $\sigma$ be some information structure. Suppose that $D$ is interior and balanced. Then there exists $\bar\mu\in T:=Supp(\tau_\sigma)$ that satisfies one of the following conditions:\\
(i) $\EE_{\omega\sim\bar\mu}[\omega]> \EE_{\omega\sim D(\mu_0,\bar\mu)}[\omega]$; or\\
(ii) $\EE_{\omega\sim\bar\mu}[\omega]=\EE_{\omega\sim D(\mu_0,\bar\mu)}[\omega]$ and $\bar\mu$ is not deterministic.
\end{lemma}

\begin{proof}
Since $D$ is balanced there are non-negative weights $\{\gamma_\mu\}_{\mu\in T}$ summing up to 1 such that 
$$\sum_{\mu\in T} \gamma_\mu D(\mu_0,\mu)=\sum_{\mu\in T} \gamma_\mu \mu.$$
Let $T'=\{\mu\in T ~:~ \gamma_\mu>0\}\subseteq T$, so that 
\begin{equation}\label{eqn-CS-T'}
\sum_{\mu\in T'} \gamma_\mu D(\mu_0,\mu)=\sum_{\mu\in T'} \gamma_\mu \mu
\end{equation}
still holds and all the coefficients are strictly positive. As in the proof of Proposition \ref{prop-transparent-linear} above, this implies that 
$$\sum_{\mu\in T'} \gamma_\mu\EE_{\omega\sim  D(\mu_0,\mu)}[\omega] = \sum_{\mu\in T'} \gamma_\mu\EE_{\omega\sim  \mu}[\omega].$$

Now, if there is $\bar\mu\in T'$ such that $\EE_{\omega\sim\bar\mu}[\omega]> \EE_{\omega\sim D(\mu_0,\bar\mu)}[\omega]$ then we are done. Otherwise, it must be that $\EE_{\omega\sim\mu}[\omega]=\EE_{\omega\sim D(\mu_0,\mu)}[\omega]$ for all $\mu\in T'$. We claim that in this case it is impossible that all the beliefs in $T'$ are deterministic. Indeed, assume by contradiction that this is the case. The vector on the left-hand side of (\ref{eqn-CS-T'}) has strictly positive coordinates since $D$ is interior. Thus, for (\ref{eqn-CS-T'}) to hold it must be that $T'=\{\delta_\omega\}_{\omega\in\Omega}$, i.e., $T'$ must contain the Dirac measures of all the states. But for the largest state, say $\bar\omega$, it is impossible that $\delta_{\bar\omega}\in T'$ since then $\EE_{\omega\sim\delta_{\bar\omega}}[\omega]= \bar\omega > \EE_{\omega\sim D(\mu_0,\bar\mu)}[\omega]$, where the inequality follows from $D$ being interior. 
\end{proof}

We now move on to the proof of the proposition. Let $\sigma^*$ be an optimal one-shot persuasion strategy (possibly degenerate), and let $T$ be the support of $\tau_{\sigma^*}$. Choose $\bar\mu\in T$ that satisfies either (i) or (ii) of Lemma \ref{lemma-CS}. Consider the following sequential persuasion strategy: In the first period $\sigma^*$ is used. If the realized posterior is any $\mu\neq \bar \mu$ then persuasion terminates. If $\bar \mu$ is the realized posterior then $n$ fully revealing signals are sent, after which the strategy terminates. Since $D$ satisfies eventual learning of states, as $n$ grows to infinity the expected payoff of S under this strategy converges to
\begin{equation}\label{eqn-CS1}
-\sum_{\mu\in T\setminus \{\bar \mu\}}\tau_{\sigma^*}(\mu)\EE_{\omega\sim \mu}\left[(\EE_{\omega'\sim D(\mu_0,\mu)}[\omega']-\omega-b)^2\right] - \tau_{\sigma^*}(\bar\mu)b^2. 
\end{equation}
If on the other hand S stops immediately after $\sigma^*$ then their payoff is 
\begin{equation}\label{eqn-CS2}
-\sum_{\mu\in T}\tau_{\sigma^*}(\mu) \EE_{\omega\sim \mu}\left[(\EE_{\omega'\sim D(\mu_0,\mu)}[\omega']-\omega-b)^2\right]. 
\end{equation}

To prove the proposition we need to argue that (\ref{eqn-CS1}) is strictly larger than (\ref{eqn-CS2}), which is equivalent to showing that
$$b^2< \EE_{\omega\sim \bar\mu}\left[(\EE_{\omega'\sim D(\mu_0,\bar\mu)}[\omega']-\omega-b)^2\right].$$
After some simple algebra the latter inequality becomes 
$$2b\left(\EE_{\omega'\sim D(\mu_0,\bar\mu)}[\omega']-\EE_{\omega\sim \bar\mu}\left[\omega\right] \right)< \EE_{\omega\sim \bar\mu}\left[(\EE_{\omega'\sim D(\mu_0,\bar\mu)}[\omega']-\omega)^2\right].$$
Now, if $\bar\mu$ satisfies property (i) of Lemma \ref{lemma-CS} then the left-hand side is strictly negative (recall that $b>0$) and the right-hand side is non-negative; and if $\bar\mu$ satisfies property (ii) of the lemma then the left-hand side is zero and the right-hand side is strictly positive; in either case, strict inequality holds.


\bibliographystyle{plainnat}
\bibliography{Sequential_BP}


\end{document}